\numberwithin{equation}{section}
\newcommand\caA{{\mathcal A}}
\newcommand\caG{{\mathcal G}}
\newcommand\caL{{\mathcal L}}
\newcommand\caM{{\mathcal M_\Sigma}}
\newcommand\caMst{{\mathcal M_{\text{st}}}}
\newcommand\caS{{\mathcal S}}
\newcommand\wx{{\widetilde x}}
\newcommand\gone{{ \mathchoice {1\mskip-4mu\mathrm{l} } {1\mskip-4mu\mathrm{l} }{1\mskip-4.5mu\mathrm{l} } {1\mskip-5mu\mathrm{l}} }}
\newcommand\gR{{\mathbb R}}
\newcommand\fois{\mathord{\cdot}}
\newcommand\dd{{\text{\textup{d}}}}
\newtheorem{Theorem}{Theorem}[section]
\newtheorem{theorem}[Theorem]{Theorem}
\newtheorem{proposition}[Theorem]{Proposition}
\newtheorem{lemma}[Theorem]{Lemma}
\newtheorem{corollary}[Theorem]{Corollary}
\newtheorem{example}[Theorem]{Example}
\newtheorem{definition}[Theorem]{Definition}
\theoremstyle{nonumberplain}
\newtheorem{proof}{Proof}
\title{Symmetries of noncommutative scalar field theory\footnote{Work
supported by the Belgian Interuniversity Attraction Pole (IAP) within the framework ``Nonlinear systems, stochastic processes, and statistical mechanics'' (NOSY).}}
\author{Axel de Goursac$^{a}$, Jean-Christophe Wallet$^b$}
\date{}
\begin{document}

\maketitle
%\vspace*{-1cm}
\begin{center}
\textit{$^a$D\'epartement de Math\'ematiques et Physique, \\ Universit\'e Catholique de Louvain,\\ Chemin du Cyclotron, 2,\\
1348 Louvain-la-Neuve, Belgium\\
  e-mail: \texttt{axelmg@melix.net}}\\
\textit{$^b$Laboratoire de Physique Th\'eorique, B\^at.\ 210\\
    Universit\'e Paris XI,  F-91405 Orsay Cedex, France\\
    e-mail: \texttt{jean-christophe.wallet@th.u-psud.fr}}\\[1ex]
\end{center}%

\vskip 2cm

\begin{abstract}
We investigate symmetries of the scalar field theory with harmonic term on the Moyal space with euclidean scalar product and general symplectic form. The classical action is invariant under the orthogonal group if this group acts also on the symplectic structure. We find that the invariance under the orthogonal group can be restored also at the quantum level by restricting the symplectic structures to a particular orbit.
\end{abstract}

\vfill
%\begin{flushleft}
%LPT-Orsay/
%\end{flushleft}

\pagebreak

\section{Introduction}

In the past few years, there has been a growing interest in the noncommutative quantum fields theories (for a review, see \cite{Wulkenhaar:2006si}). These theories on ``spaces'' coming from noncommutative geometry \cite{Connes:1994} are indeed strong candidates for new physics behind the Standard Model of particle physics. Moreover, fields theories defined on the Moyal space \cite{Bayen:1978}, one of the simplest example of noncommutative space, can be seen as an effective regime of string theory \cite{Seiberg:1999vs} and matrix theory \cite{Connes:1997cr}.

The simplest generalization of the $\varphi^4$ commutative scalar theory gives rise to a new type of divergence, the ultraviolet-infrared (UV/IR) mixing \cite{Minwalla:1999px}, which is responsible for the non-renormalizability of this model. The first solution to this problem of UV/IR mixing, due to Grosse and Wulkenhaar, was to add some harmonic term in the action \cite{Grosse:2004yu}, and the resulting theory is then renormalizable up to all orders in perturbation \cite{Grosse:2004yu,Gurau:2005gd,Gurau:2007fy}. The renormalizability of this model seems to be related to a new symmetry thanks to this harmonic term, the Langmann-Szabo duality \cite{Langmann:2002cc}, which exchanges positions and impulsions at the level of the quadratic terms of the action. This Langmann-Szabo duality has also been interpreted in the framework of superalgebras as a grading symmetry \cite{deGoursac:2008bd,deGoursac:2010zb} and an adapted differential calculus has been exhibited (see also \cite{Cagnache:2008tz,Wallet:2008bq}). Note that another interpretation of the harmonic term has been given in \cite{Buric:2009ss}. Moreover, the vacuum solutions of this theory have been studied in \cite{deGoursac:2007uv}, and it has been proved that the beta function vanishes up to irrelevant terms \cite{Grosse:2004by,Disertori:2006nq}. Associated to this scalar field theory with harmonic term, a gauge invariant action, candidate to renormalizability, has been exhibited in \cite{deGoursac:2007gq} (see also \cite{Grosse:2007dm,Wallet:2007em}), but its vacua are always non-trivial \cite{deGoursac:2008rb}.

However, since a symplectic structure $\Sigma$ is necessary to define the Moyal product, the group of rotations is no longer a symmetry group for this theory on the Moyal space as soon as the dimension $D>2$. The question of rotational invariance of the theory is important since it corresponds to the Lorentz invariance, necessary for a physical theory, in the Minkowskian framework. This problem is related to the fact that a symplectic structure is not natural on a configuration (or position) space, contrary to a phase space. A simple standard way to restore the rotational invariance of the classical action is to consider a family of actions labeled by the symplectic structures rather than a single action corresponding to a fixed symplectic structure. Then, one also allows the rotations to act on these symplectic structures. This idea has been used within the Minkowskian framework in \cite{Doplicher:1994tu} (see also \cite{Grosse:2007vr}). In this paper, we find that if we choose a particular orbit and the corresponding family of actions, the correlation functions are the same for each action of this family, so that the quantum action is invariant under the whole orthogonal group (which contains the rotations).

The paper is organized as follows. In subsection 2.1, we introduce the Moyal algebra $\caM$, endowed with a general scalar product $G$ and a general symplectic structure $\Sigma$, and present the standard Moyal space $\mathcal M_{\text{st}}$, on which is defined the Grosse-Wulkenhaar model. Then, we define in subsection 2.2 the classical action of the scalar field theory with harmonic term on $\caM$ and see that it is invariant under the action of the orthogonal group if this group acts also on $\Sigma$. Subsection 2.3 is devoted to recall the notion of symplectic structures adapted to $G$, which form a single orbit of the action of the orthogonal group. By assuming that $\Sigma$ belongs to this particular orbit, we find in subsection 3.1 the expression of the propagator, we express the amplitudes of Feynman graphs in terms of those of the standard theory, and show the invariance of the bare (regularized) effective action under the orthogonal group. As a consequence, we obtain in subsection 3.2 the renormalizability of this model in $D=4$, and the invariance of the renormalized effective action under the orthogonal group. Finally, subsection 3.3 is devoted to the discussion of these results.

\section{ The orthogonal group and symplectic structures on the Moyal space}

\subsection{Presentation of the Moyal space}

We give here the definition of the Moyal space \cite{Bayen:1978} for any scalar product and symplectic structure, and some of its properties. Let $\caS\equiv\caS(\gR^D)$ with even dimension $D$ and $\caS^\prime$ be respectively the space of complex-valued Schwartz functions on $\gR^D$ and its topological dual, the space of tempered distributions on $\gR^D$. Consider a fixed symplectic structure on $\gR^D$, represented by an invertible skew-symmetric real $D\times D$ matrix $\Sigma$, and $\Theta=\theta \Sigma$, where the positive parameter $\theta$ has mass dimension $-2$. Let also $G$ be a scalar product, namely an invertible symmetric real $D\times D$ matrix.
\begin{definition}
The Moyal product associated to $\Theta$ (and $G$) can be conveniently defined as $\star_\Sigma:\caS\times\caS\to\caS$ by: $\forall a,b\in\caS$,
\begin{equation}
(a\star_\Sigma b)(x)=\frac{|\text{det}G|^2}{(\pi\theta)^D|\text{det}\Sigma|}\int d^Dyd^Dz\ a(x+y)b(x+z)e^{-iy\wedge z}, \label{eq:moyal}
\end{equation}
where $y\wedge z=2y_\mu G_{\mu\nu}\Theta^{-1}_{\nu\rho}G_{\rho\sigma}z_\sigma$.
\end{definition}

\begin{proposition}
\begin{itemize}
\item The product \eqref{eq:moyal} is associative, and one has: $\forall a,b,c\in\caS$, $\forall x\in\gR^D$,
\begin{equation}
(a\star_\Sigma b\star_\Sigma c)(x)=\frac{|\det(G)|^2}{(\pi\theta)^D|\det(\Sigma)|}\int d^Dy d^Dz\ a(x+y)b(x+z)c(x-y+z)e^{-iy\wedge z}.
\end{equation}
\item It also satisfies to the tracial identity: $\forall a,b\in\caS$,
\begin{equation}
\int d^Dx\ (a\star_\Sigma b)(x)=\int d^Dx\ a(x) b(x).\label{tracident}
\end{equation}
\end{itemize}
\end{proposition}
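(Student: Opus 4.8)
The plan is to prove both assertions by directly manipulating the oscillatory integrals defining $\star_\Sigma$, the whole point being that the prefactor $C_\Sigma:=\frac{|\det G|^2}{(\pi\theta)^D|\det\Sigma|}$ is tuned so that ``contracting'' the Gaussian-type phase over one free variable produces exactly a normalized delta distribution. The single computational device I would use repeatedly is the Fourier representation
\[
\int_{\gR^D} d^D\xi\ e^{-i\,\xi\wedge w}=\frac{\pi^D}{|\det(G\Theta^{-1}G)|}\,\delta^D(w),
\]
which holds because $\xi\wedge w=2\xi_\mu(G\Theta^{-1}G)_{\mu\sigma}w_\sigma$ with $2G\Theta^{-1}G$ invertible and antisymmetric, and where $|\det(G\Theta^{-1}G)|=|\det G|^2/(\theta^D|\det\Sigma|)$ since $\Theta=\theta\Sigma$. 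Combined with $C_\Sigma$ this gives the key normalization identity $C_\Sigma\,\pi^D/|\det(G\Theta^{-1}G)|=1$, which is what makes every constant collapse. As an alternative I would note that one could instead bring the pair $(G,\Sigma)$ to canonical form by a linear change of coordinates, reduce to the standard Moyal space $\caMst$, and invoke the classical associativity and trace there; but the direct route is self-contained, so I would follow it.

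For the tracial identity I would start from $\int d^Dx\,(a\star_\Sigma b)(x)$, insert the definition, and change variables $(y,z)\mapsto(u,v)=(x+y,x+z)$ at fixed $x$ (Jacobian $1$). By antisymmetry of $\wedge$ the phase becomes $(u-x)\wedge(v-x)=u\wedge v+x\wedge(u-v)$, so the $x$-integral is now purely the Fourier integral above and produces $\delta^D(u-v)$. Setting $u=v$ annihilates the surviving phase because $u\wedge u=0$, and the leftover constant is exactly $C_\Sigma\,\pi^D/|\det(G\Theta^{-1}G)|=1$, leaving $\int d^Dx\,a(x)b(x)$. This step is short once the delta device is in hand.

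For associativity I would compute $\big((a\star_\Sigma b)\star_\Sigma c\big)(x)$, expand it as a fourfold integral over $(y,z,y',z')$, and pass to variables $(\alpha,\beta,\gamma,\delta)$ adapted to the arguments, namely $\alpha=y+y'$, $\beta=y+z'$, $\gamma=z$, $\delta=y$, so that $a,b,c$ are evaluated at $x+\alpha,x+\beta,x+\gamma$ and the Jacobian has modulus one. Regrouping the two phases and using antisymmetry, the total phase splits cleanly as $\alpha\wedge\beta+\delta\wedge(\alpha-\beta+\gamma)$, in which $\delta$ occurs only in the linear second term; integrating out $\delta$ yields $\frac{\pi^D}{|\det(G\Theta^{-1}G)|}\delta^D(\alpha-\beta+\gamma)$. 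Eliminating $\gamma=\beta-\alpha$ and applying the normalization identity once more collapses $C_\Sigma^2\,\pi^D/|\det(G\Theta^{-1}G)|$ down to $C_\Sigma$, giving precisely the claimed symmetric formula with $c$ evaluated at $x-\alpha+\beta$. Performing the same reduction on $\big(a\star_\Sigma(b\star_\Sigma c)\big)(x)$, using the analogous change of variables and the identity $\beta\wedge(\beta-\alpha)=\alpha\wedge\beta$, leads to the identical expression, which proves associativity.

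The part I expect to require the most care is not the algebra but the analytic justification: the defining integrals are only conditionally (oscillatory) convergent, so interchanging orders of integration and representing $\delta^D$ as $\int e^{-i\xi\wedge w}\,d^D\xi$ are not literally legitimate for the bare integrals. I would make these steps rigorous by inserting a Gaussian regulator $e^{-\varepsilon|\xi|^2}$ in the variable being integrated out, applying Fubini to the now absolutely convergent integrals, and letting $\varepsilon\to0^+$, using $a,b,c\in\caS$ to control the limits; one also checks along the way that $a\star_\Sigma b\in\caS$, so that the iterated products are well defined. Everything else is bookkeeping with the antisymmetric form $\wedge$ and the unimodular changes of variables.
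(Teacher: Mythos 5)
Your proof is correct and takes essentially the same route as the paper: the paper's entire proof is the delta-function identity $\int d^Dx\ e^{-ix\wedge y}=\frac{(\pi\theta)^D|\det(\Sigma)|}{|\det(G)|^2}\delta^D(y)$ (your constant $\pi^D/|\det(G\Theta^{-1}G)|$ is exactly this same normalization, so that it cancels the prefactor) followed by the ``straightforward computation'' that you carry out explicitly. Your changes of variables, phase manipulations, and the Gaussian-regulator justification are correct fillings-in of details the paper leaves implicit.
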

\begin{proof}
It is a straightforward computation using the equation
\begin{equation}
\int d^Dx\ e^{-ix\wedge y}=\frac{(\pi\theta)^D|\det(\Sigma)|}{|\det(G)|^2}\delta^D(y).
\end{equation}
\end{proof}

The product \eqref{eq:moyal} can be further extended to $\caS^\prime\times\caS$ (and $\caS \times \caS^\prime$) upon using duality of linear spaces: $\langle T\star_\Sigma a,b \rangle = \langle T,a\star_\Sigma b\rangle$, $\forall T\in\caS^\prime$, $\forall a,b\in\caS$. The Moyal algebra $\caM$ is then defined as
\begin{equation}
\caM=\mathcal L\cap \mathcal R,\label{eq:Moyal-alg}
\end{equation}
where $\mathcal L$ (resp. $\mathcal R$) is the subspace of $\caS^\prime$ whose multiplication from right (resp. left) by any Schwartz function is a subspace of $\caS$.

$(\caM,\star_\Sigma,{}^\dag)$ is a unital involutive algebra which involves in particular the ``coordinate'' functions $x_\mu$ satisfying to the relation defined on $\caM$:
\begin{equation}
[x_\mu,x_\nu]_{\Sigma} =i\theta (G^{-1}\Sigma G^{-1})_{\mu\nu},\label{eq:comrelation}
\end{equation}
where we set $[a,b]_\Sigma=a\star_\Sigma b-b\star_\Sigma a$ and $\{a,b\}_\Sigma=a\star_\Sigma b+b\star_\Sigma a$.

Let us give some relevant properties of the Moyal space. If $\wx_\mu=2G_{\mu\nu}\Theta^{-1}_{\nu\rho}G_{\rho\sigma}x_\sigma$, $\forall a,b\in\caM$,
\begin{subequations}
\begin{align}
\partial_\mu(a\star_\Sigma b)&=(\partial_\mu a)\star_\Sigma b+a\star_\Sigma(\partial_\mu b), \label{eq:relat1}\\
[\wx_\mu,a]_\Sigma&=2i\partial_\mu a,\label{eq:relat2}\\
\{\wx_\mu,a\}_\Sigma&=2\wx_\mu a.\label{eq:relat3}
\end{align}
\end{subequations}
\medskip

The standard $D$-dimensional Moyal space $\caMst$ (see \cite{GraciaBondia:1987kw,Varilly:1988jk} for more details) corresponds to the euclidean metric $G^{\text{st}}$ and standard symplectic structure $\Sigma^{\text{st}}$, which take the following form in four dimensions:
\begin{align}
G^{\text{st}}_{\mu\nu}=\begin{pmatrix} 1 & 0 & 0 & 0 \\ 0 & 1 & 0 & 0 \\ 0 & 0 & 1 & 0 \\ 0 & 0 & 0 & 1 \end{pmatrix}, \quad\Sigma^{\text{st}}_{\mu\nu}=\begin{pmatrix} 0 & -1 & 0 & 0 \\ 1 & 0 & 0 & 0 \\ 0 & 0 & 0 & -1 \\ 0 & 0 & 1 & 0 \end{pmatrix}.\label{stmetric}
 \end{align}
On this standard Moyal space, the Grosse-Wulkenhaar model, which cures the problem of UV/IR mixing is defined by the action \cite{Grosse:2004yu}:
\begin{equation}
S(\phi)=\int d^Dx\Big(\frac 12\partial_\mu\phi\star\partial_\mu\phi +\frac{\Omega^2}{2}(\wx_\mu\phi)\star(\wx_\mu\phi) +\frac{m^2}{2}\phi\star\phi +\lambda\phi\star\phi\star\phi\star\phi\Big),\label{actscal}
\end{equation}
and it is renormalizable up to all orders in perturbation \cite{Grosse:2004yu,Gurau:2005gd} in two and four dimensions. The propagator associated to the action \eqref{actscal} is given by \cite{Gurau:2005qm} ($\widetilde \Omega=\frac{2\Omega}{\theta}$):
\begin{align}
C^{\text{st}}(x,y)&=\frac{\theta}{4\Omega}\left(\frac{\Omega}{\pi\theta}\right)^{\frac D2}\int_0^\infty \!\! \frac{d\alpha}{\sinh^{\frac D2}(\alpha)} e^{-\frac{m^2\alpha}{2\widetilde \Omega}}C^{\text{st}}(x,y,\alpha),\\
C^{\text{st}}(x,y,\alpha) &= \exp\Big(-\frac{\widetilde \Omega}{4}\coth(\frac \alpha 2)(x-y)^2 -\frac{\widetilde \Omega}{4}\tanh(\frac \alpha 2)(x+y)^2\Big).
\end{align}
The associated gauge theory is given by \cite{deGoursac:2007gq}:
\begin{equation}
S=\int d^4x \Big(\frac{\alpha}{4g^2}F_{\mu\nu}\star F_{\mu\nu} +\frac{\Omega'}{4g^2}\{\caA_\mu,\caA_\nu\}^2_\star
+\frac{\kappa}{2} \caA_\mu\star\caA_\mu\Big),\label{actgauge}
\end{equation}
where $F_{\mu\nu}=\partial_\mu A_\nu-\partial_\nu A_\mu-i[A_\mu,A_\nu]_\star$, and $\caA_\mu=A_\mu+\frac 12\wx_\mu$ is the covariant coordinate. It is candidate to renormalizability.

\subsection{Invariance of the classical action under the orthogonal group}

In the notations of the previous subsection, we consider the Moyal algebra $\caM$, for a general scalar product $G$ and a symplectic structure $\Sigma$.

\begin{definition}
 The classical action of the scalar theory with harmonic term on the Moyal space $\caM$ is given by:
\begin{equation}
S(\phi)=\int d^Dx\ \Big(\frac 12G_{\mu\nu}^{-1}(\partial_\mu\phi)(\partial_\nu\phi) +\frac{\Omega^2}{2}G_{\mu\nu}^{-1}\wx_\mu\wx_\nu\phi^2+\frac{m^2}{2}\phi^2+\lambda\phi\star_\Sigma\phi\star_\Sigma\phi\star_\Sigma\phi\Big).\label{actscalcurv}
\end{equation}
\end{definition}
It turns out that this action is covariant under Langmann-Szabo duality \cite{Langmann:2002cc}, which at the level of quadratic terms, basically exchanges $\partial_\mu$ and $\wx_\mu$.

The symmetries associated to $G$ forms the orthogonal group:
\begin{equation}
O(\gR^D,G)=\{\Lambda\in GL(\gR^D),\quad \Lambda^T G\Lambda=G\},
\end{equation}
while those associated to $\Sigma$ are the symplectic isomorphisms:
\begin{equation}
Sp(\gR^D,\Sigma)=\{\Lambda\in GL(\gR^D),\quad \Lambda^T \Sigma \Lambda=\Sigma\}.
\end{equation}
Then, for fixed $G$ and $\Sigma$, the symmetries of the action \eqref{actscalcurv} are orthogonal and symplectic isomorphisms
\begin{equation}
Sym(\gR^D,G,\Sigma)=O(\gR^D,G)\cap Sp(\gR^D,\Sigma),
\end{equation}
whose group is isomorphic to $SU(\frac D2)$.

In order to recover all elements of $O(\gR^D,G)$ as symmetries of the classical action \eqref{actscalcurv}, one can consider $\Sigma$ as a tensor, namely also transforming under the orthogonal group, by the following left action:
\begin{align}
O(\gR^D,G)\times Sympl(\gR^D)&\to Sympl(\gR^D)\nonumber\\
(\Lambda,\Sigma)&\mapsto (\Lambda^{-1})^T\Sigma\Lambda^{-1}\label{actrot}
\end{align}
where $Sympl(\gR^D)$ denotes the set of all symplectic structures on $\gR^D$. The orbits and isotropy groups of this action are then defined by:
\begin{align}
\mathcal O_{\Sigma}&=\{(\Lambda^{-1})^T\Sigma\Lambda^{-1},\quad \Lambda\in O(\gR^D,G)\},\\
Gr_\Sigma&=\{\Lambda\in O(\gR^D,G),\quad (\Lambda^{-1})^T\Sigma\Lambda^{-1}=\Sigma\}=Sym(\gR^D,G,\Sigma).
\end{align}

This is a well-known procedure in field theory and one obtains an invariant classical action.
\begin{proposition}
The classical action \eqref{actscalcurv} is invariant under the following action of the orthogonal group on the field $\phi$ and on $\Sigma^{-1}$:
\begin{align}
&\phi\mapsto\phi^\Lambda,\qquad \text{such that } \phi^\Lambda(x)=\phi(\Lambda^{-1}x)\nonumber\\
&\Sigma^{-1}\mapsto \Lambda\Sigma^{-1} \Lambda^T,\label{actrotmoins}
\end{align}
where $\Lambda\in O(\gR^D,G)$.
\end{proposition}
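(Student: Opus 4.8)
The plan is to verify $S(\phi^\Lambda)=S(\phi)$ term by term, where in $S(\phi^\Lambda)$ both the product $\star_{\Sigma'}$ and the vector $\wx$ are built from the transformed structure $\Sigma'$ defined by $(\Sigma')^{-1}=\Lambda\Sigma^{-1}\Lambda^T$. The backbone of every term is the substitution $x\mapsto\Lambda x$ in the $x$-integral. Since $\Lambda\in O(\gR^D,G)$ obeys $\Lambda^T G\Lambda=G$, taking determinants yields $|\det\Lambda|=1$, so $d^Dx$ is invariant under this substitution and $|\det\Sigma'|=|\det\Lambda|^{-2}|\det\Sigma|=|\det\Sigma|$, whence the normalization prefactor in \eqref{eq:moyal} is unchanged when $\Sigma$ is replaced by $\Sigma'$.

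The crucial step is a covariance property of the Moyal product, namely $(a\star_\Sigma b)(\Lambda^{-1}x)=(a^\Lambda\star_{\Sigma'}b^\Lambda)(x)$ for all $a,b$. To prove it I would start from the integral \eqref{eq:moyal} defining $\star_{\Sigma'}$, write $a^\Lambda(x+y)=a(\Lambda^{-1}x+\Lambda^{-1}y)$, and substitute $y\mapsto\Lambda y$, $z\mapsto\Lambda z$ (again of unit Jacobian). Because $\Theta'^{-1}=\Lambda\Theta^{-1}\Lambda^T$, the phase becomes
\[
2\,(\Lambda y)^T G\,\Theta'^{-1}\,G\,(\Lambda z)=2\,y^T(\Lambda^T G\Lambda)\Theta^{-1}(\Lambda^T G\Lambda)z=2\,y^T G\Theta^{-1}G\,z,
\]
after using $\Lambda^T G\Lambda=G$ twice, i.e. exactly the phase $y\wedge z$ of $\star_\Sigma$; the transformed integral thus reproduces $(a\star_\Sigma b)(\Lambda^{-1}x)$. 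By associativity this extends to the fourfold product, so after the substitution $x\mapsto\Lambda x$ the interaction term $\lambda\int\phi^\Lambda\star_{\Sigma'}\phi^\Lambda\star_{\Sigma'}\phi^\Lambda\star_{\Sigma'}\phi^\Lambda$ coincides with $\lambda\int\phi\star_\Sigma\phi\star_\Sigma\phi\star_\Sigma\phi$.

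The remaining three terms use the same substitution and $\Lambda^T G\Lambda=G$. The mass term is immediate. For the kinetic term the chain rule gives $\partial_\mu\phi^\Lambda(x)=(\Lambda^{-1})_{\rho\mu}(\partial_\rho\phi)(\Lambda^{-1}x)$, and the inverse relation $\Lambda^{-1}G^{-1}(\Lambda^{-1})^T=G^{-1}$ turns $G^{-1}_{\mu\nu}(\partial_\mu\phi^\Lambda)(\partial_\nu\phi^\Lambda)$ into $\big(G^{-1}_{\rho\sigma}(\partial_\rho\phi)(\partial_\sigma\phi)\big)(\Lambda^{-1}x)$, which integrates back to the original. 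For the harmonic term, writing $\wx_\mu=2(G\Theta^{-1}G\,x)_\mu$ the transformed vector is $\wx^\Lambda=2Nx$ with $N=G\Lambda\Theta^{-1}\Lambda^T G$; substituting $x=\Lambda u$ sends the quadratic form $G^{-1}_{\mu\nu}\wx^\Lambda_\mu\wx^\Lambda_\nu=4\,x^TN^TG^{-1}Nx$ into $4\,u^T(N\Lambda)^TG^{-1}(N\Lambda)u$ with $N\Lambda=G\Lambda\Theta^{-1}G$, and a further use of $\Lambda^T G\Lambda=G$ shows $(N\Lambda)^TG^{-1}(N\Lambda)=-G\Theta^{-1}G\Theta^{-1}G$, exactly the untransformed quadratic form; together with $\phi^2(\Lambda^{-1}x)\mapsto\phi^2(u)$ this gives the invariance.

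The main obstacle is precisely the harmonic term: unlike the mass and kinetic pieces it is not simply a reparametrization of $\phi$, since $\wx$ depends on $\Sigma$ through $\Theta^{-1}$ and hence transforms nontrivially. The cancellation works only because the symplectic matrix enters quadratically, so that two insertions of $\Lambda^T G\Lambda=G$ absorb all factors of $\Lambda$. The star-product covariance is conceptually the key point, but once one recognizes that transforming $\Sigma$ is exactly what restores invariance of the phase $y\wedge z$, its proof is a routine change of variables.
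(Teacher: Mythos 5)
Your proof is correct and follows essentially the same route as the paper: the change of variables $x\mapsto\Lambda x$ (unit Jacobian since $\Lambda^TG\Lambda=G$), the chain rule with $\Lambda^{-1}G^{-1}(\Lambda^{-1})^T=G^{-1}$ for the kinetic term, and the rewriting of the harmonic term as the quadratic form $-\frac{4}{\theta^2}x^TG\Sigma^{-1}G\Sigma^{-1}Gx$ so that two insertions of $\Lambda^TG\Lambda=G$ absorb the transformation of $\Sigma^{-1}$. The only difference is one of presentation: you make explicit the star-product covariance $(a\star_\Sigma b)(\Lambda^{-1}x)=(a^\Lambda\star_{\Sigma'}b^\Lambda)(x)$ handling the interaction term, which the paper subsumes under ``straightforward computation.''
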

\begin{proof}
It is a straightforward computation by using the change of variables $x'=\Lambda^{-1}x$ in the integral, and the identities:
\begin{align*}
&\frac{\partial}{\partial x_\mu}\phi^\Lambda(x)=\frac{\partial\phi}{\partial x_\nu'}(x')\,\Lambda^{-1}_{\nu\mu}, & &\Lambda^{-1} G^{-1}(\Lambda^{-1})^T=G^{-1},\\
&\wx_\mu G^{-1}_{\mu\nu}\wx_\nu=-\frac{4}{\theta^2}x_\mu(G\Sigma^{-1}G\Sigma^{-1}G)_{\mu\nu}x_\nu, & &\Lambda^T G\Lambda=G.
\end{align*}
\end{proof}

Starting from a fixed symplectic structure $\Sigma_0$ in the theory, one can then reach the whole orbit of $\Sigma_0$ with the action of the orthogonal group, namely at the level of the inverses of the symplectic structures:
\begin{equation}
\mathcal O^{(-1)}_{\Sigma_0}=\{\Lambda\Sigma_0^{-1}\Lambda^T,\quad \Lambda\in O(\gR^D,G)\}.
\end{equation}
There exists a particular orbit of this action, composed of the adapted symplectic structures $\Sigma$ for a fixed $G$. This orbit will be crucial to restore the symmetry of the whole orthogonal group at the quantum level.

\subsection{The orbit of adapted symplectic structures}

Let us give some basic definitions useful in the following.

\begin{definition}
Let $V$ be a real vector space of finite even dimension $D$ and $G$ be a positive definite scalar product on $V$.
\begin{itemize}
\item A linear map $I\in\caL(V)$ is called a complex structure on $V$ if $I^2=-\gone$.
\item A symplectic structure $\Sigma$ on $V$ is said to be adapted to $G$ if there exists a complex structure $I$ on $V$ satisfying:
\begin{equation*}
I^TGI=G\text{ and }\Sigma=I^TG,
\end{equation*}
in matrix notations.
\end{itemize}
\end{definition}

\begin{example}
Consider the vector space $\gR^D$, endowed with the standard euclidean metric $G^{\text{st}}$, and the standard symplectic structure $\Sigma^{\text{st}}$, which are given by \eqref{stmetric} in the four-dimensional case. Then, $\Sigma^{\text{st}}$ is adapted to $G^{\text{st}}$ by the complex structure $I^{\text{st}}=-\Sigma^{\text{st}}$.
\end{example}

We recall the standard theorem (see textbooks on the subject, for example \cite{Moroianu:2007}) about adapted symplectic structures and its proof, in order to define some notations which will be useful in the following.
\begin{theorem}
\label{thmorbit}
For a given positive scalar product $G$ of $\gR^D$, the set of all symplectic structures adapted to $G$ is exactly one orbit of the action \eqref{actrot} of the orthogonal group $O(\gR^D,G)$.
\end{theorem}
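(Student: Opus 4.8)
The plan is to translate the statement about symplectic structures into one about $G$-compatible complex structures, where the action \eqref{actrot} becomes conjugation, and then to establish separately that the relevant set is closed under the action and that the action is transitive on it. First I would set up the dictionary. Given an adapted $\Sigma = I^T G$, the compatibility $I^T G I = G$ together with $I^2 = -\gone$ forces $\Sigma$ to be genuinely symplectic: multiplying $I^T G I = G$ on the right by $I^{-1} = -I$ gives $I^T G = -GI$, whence $\Sigma^T = GI = -I^T G = -\Sigma$, and invertibility of $I$ and $G$ gives invertibility of $\Sigma$. Conversely $I = G^{-1}\Sigma^T$ is recovered from $\Sigma$, so $I \mapsto I^T G$ is a bijection from the $G$-compatible complex structures onto the adapted symplectic structures. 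A short computation, using $\Lambda^T G = G\Lambda^{-1}$ for $\Lambda \in O(\gR^D,G)$, shows $(\Lambda I \Lambda^{-1})^T G = (\Lambda^{-1})^T (I^T G)\Lambda^{-1}$, so this bijection intertwines conjugation $I \mapsto \Lambda I \Lambda^{-1}$ with the action \eqref{actrot}. It therefore suffices to prove that the $G$-compatible complex structures form exactly one conjugacy class under $O(\gR^D,G)$.

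Closure is then immediate. If $I^2 = -\gone$ and $I^T G I = G$, then $\Lambda I \Lambda^{-1}$ squares to $-\gone$, and using $\Lambda^T G \Lambda = G$ one checks $(\Lambda I \Lambda^{-1})^T G (\Lambda I \Lambda^{-1}) = (\Lambda^{-1})^T I^T G I \Lambda^{-1} = (\Lambda^{-1})^T G \Lambda^{-1} = G$, so the conjugate is again $G$-compatible. Hence the orbit of any adapted $\Sigma$ under \eqref{actrot} remains inside the adapted set.

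The substantive step, and the one I expect to be the main obstacle, is transitivity: any two $G$-compatible complex structures $I_1, I_2$ must be conjugate by some $\Lambda \in O(\gR^D,G)$. The key input is that each $I$ admits an adapted $G$-orthonormal basis, constructed by an inductive Gram–Schmidt argument. Choosing a $G$-unit vector $e_1$, the relation $G(Iu,Iv) = G(u,v)$ shows $Ie_1$ is again a unit vector and that $G(e_1, Ie_1) = -G(e_1, Ie_1) = 0$, so $\{e_1, Ie_1\}$ is orthonormal; one then verifies that the $G$-orthogonal complement of $\mathrm{span}\{e_1, Ie_1\}$ is $I$-invariant (using $e_1 = -I(Ie_1)$ to handle the pairing with $e_1$), and iterating on this $(D-2)$-dimensional subspace yields a $G$-orthonormal basis $\{e_1,\dots,e_n,Ie_1,\dots,Ie_n\}$ with $D = 2n$. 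Given such adapted bases for $I_1$ and $I_2$, the linear map $\Lambda$ carrying one to the other lies in $O(\gR^D,G)$ (it sends a $G$-orthonormal basis to a $G$-orthonormal one) and satisfies $\Lambda I_1 = I_2 \Lambda$ by construction, so $I_2 = \Lambda I_1 \Lambda^{-1}$. Combined with closure, this shows the adapted symplectic structures form precisely one orbit of \eqref{actrot}.
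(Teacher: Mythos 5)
Your proof is correct, but it takes a genuinely different route from the paper's. The paper works directly at the matrix level: it conjugates by $G^{\frac12}$ to reduce to the Euclidean case, observes that $J=-G^{-\frac12}\Sigma G^{-\frac12}$ is both orthogonal and antisymmetric, invokes the normal form of such matrices (block-diagonal with $2\times 2$ blocks $\pm\sigma$), and flips the unwanted signs with the reflection $\rho$ to land exactly on $\Sigma^{\text{st}}$; this yields an explicit factorization $\Sigma=G^{\frac12}R^T\Sigma^{\text{st}}RG^{\frac12}$ with $R\in O(\gR^D,\gone)$, hence $\Sigma=(\Lambda^{-1})^T\Sigma_0\Lambda^{-1}$ with $\Lambda^{-1}=G^{-\frac12}RG^{\frac12}$. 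You instead transfer the problem to $G$-compatible complex structures via the bijection $I\mapsto I^TG$ (your intertwining computation using $\Lambda^TG=G\Lambda^{-1}$ is correct) and prove transitivity by the classical Gram--Schmidt construction of an adapted $G$-orthonormal basis $\{e_1,\dots,e_n,Ie_1,\dots,Ie_n\}$; the individual steps (unit length of $Ie_1$, the orthogonality $G(e_1,Ie_1)=0$, the $I$-invariance of the orthogonal complement) all check out. Your argument is more intrinsic and more elementary --- it needs neither the square root $G^{\frac12}$ nor the normal-form theorem for antisymmetric orthogonal matrices, and it works verbatim on an abstract inner-product space --- but it does not produce the explicit matrix $R$, which the paper reuses crucially later: the proof of Theorem \ref{thm-ampl} performs the change of variables $y=RG^{\frac12}x$ and quotes the relation $\Sigma^{-1}=G^{-\frac12}R^T(\Sigma^{\text{st}})^{-1}RG^{-\frac12}$ ``in the notations of the proof of Theorem \ref{thmorbit}''. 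One small formal gap in your write-up: closure plus transitivity only show that the adapted symplectic structures form \emph{at most} one orbit; to conclude ``exactly one'' you also need the set to be non-empty, which the paper gets for free from its explicit base point $\Sigma_0=G^{\frac12}\Sigma^{\text{st}}G^{\frac12}$. In your framework this is equally immediate --- take any $G$-orthonormal basis $\{u_1,\dots,u_D\}$ and set $Iu_{2k-1}=u_{2k}$, $Iu_{2k}=-u_{2k-1}$ --- but it should be stated.
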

\begin{proof}
\begin{itemize}
\item Let $Sympl_G(\gR^D)$ be the set of all symplectic structures adapted to $G$. Then, $\forall\Sigma\in Sympl_G(\gR^D)$, ($\Sigma=I^TG$) $\forall \Lambda\in O(\gR^D,G)$, $(\Lambda^{-1})^T\Sigma\Lambda^{-1}$ is a symplectic form adapted to $G$ by the complex structure $\Lambda I\Lambda^{-1}$. $Sympl_G(\gR^D)$ is therefore a certain number of orbits of the action \eqref{actrot}.
\item Since $G$ is positive definite real symmetric, there exists a $G^{\frac 12}$ also real symmetric. Let $\Sigma_0=G^{\frac12}\Sigma^{\text{st}}G^{\frac12}$. $\Sigma_0$ is a symplectic form of $\gR^D$ associated to $G$ by the complex structure $I_0=-G^{-\frac12}\Sigma^{\text{st}}G^{\frac12}=G^{-\frac12}I^{\text{st}}G^{\frac12}$.
\item If $\Sigma\in Sympl_G(\gR^D)$, let $I$ be the complex structure associating $\Sigma$ to $G$. Then,
\begin{equation}
J=-G^{-\frac12}\Sigma G^{-\frac12}=G^{\frac12}IG^{-\frac12}\in O(\gR^D,\gone).
\end{equation}
Indeed, $J^TJ=-G^{\frac12}I^2G^{-\frac12}=\gone$. As a consequence, and since $J$ is also antisymmetric: $J^T=-J$, the matrix $J$ can be written as
\begin{equation}
J=-S^T\begin{pmatrix} \varepsilon_1\sigma & 0 &  & 0 \\ 0 & \varepsilon_2\sigma &  & 0 \\  &  & \ddots &  \\ 0 & 0 &  & \varepsilon_{\frac D2}\sigma \end{pmatrix}S,
\end{equation}
where $\sigma=\begin{pmatrix}0&-1\\1&0\end{pmatrix}$ and $\varepsilon_\alpha\in\{\pm 1\}$. Let $\rho=\begin{pmatrix}0&1\\1&0\end{pmatrix}$. One has: $\rho^T\sigma\rho=-\sigma$ and $\rho\in O(\gR^2,\gone_2)$. Let $S_0$ be the block diagonal $D\times D$ matrix given by: if $\varepsilon_\alpha=1$, the $\alpha^{\text{th}}$ $2\times 2$ diagonal block of $S_0$ is $\gone_2$, else it is $\rho$. Then, with $R=S_0S\in O(\gR^D,\gone)$,
\begin{equation}
J=-R^T(\Sigma^{\text{st}})R.
\end{equation}
One has the decomposition: $\Sigma=G^{\frac12}R^T(\Sigma^{\text{st}})RG^{\frac12}$.
\item By setting $\Lambda^{-1}=G^{-\frac12}RG^{\frac12}$, one has: $\Lambda\in O(\gR^D,G)$, and $\Sigma=(\Lambda^{-1})^T\Sigma_0\Lambda^{-1}$, which shows that $\Sigma$ is in the orbit $\mathcal O_{\Sigma_0}$ of $\Sigma_0$.
\end{itemize}
\end{proof}

For $\Sigma$ adapted to the scalar product $G$, this result permits to exhibit an isomorphism $\caMst\to\caM$ given by $f\mapsto f(RG^{\frac12}\fois)$, in the above notations.

\section{Noncommutative QFT for adapted symplectic structures}

\subsection{Invariance of the quantum action under orthogonal group}

We consider in this subsection the $D$-dimensional Moyal space endowed with a positive scalar product $G$ and a symplectic structure $\Sigma$, adapted to $G$. In this setting, it is possible to prove that the scalar theory with harmonic term \eqref{actscalcurv} is invariant under the orthogonal group at the quantum level.

The propagator of the theory \eqref{actscalcurv} is defined to satisfy to:
\begin{equation}
\left(-\frac 12G^{-1}_{\mu\nu}\partial^x_\mu\partial^x_\nu+\frac{\Omega^2}{2}G^{-1}_{\mu\nu}\wx_\mu\wx_\nu+\frac{m^2}{2} \right)C(x,y)=\delta(x-y)
\end{equation}

\begin{proposition}
Since $\Sigma$ is adapted to $G$, the propagator of the action \eqref{actscalcurv} takes the following form:
\begin{align}
C(x,y)&=\frac{\theta\sqrt{|\det(G)|}}{4\Omega}\left(\frac{\Omega}{\pi\theta}\right)^{\frac D2}\int_0^\infty \!\! \frac{d\alpha}{\sinh^{\frac D2}(\alpha)} e^{-\frac{m^2\alpha}{2\widetilde \Omega}}C(x,y,\alpha),\label{eq-propag}\\
C(x,y,\alpha) &= \exp\Big(-\frac{\widetilde \Omega}{4}\coth(\frac \alpha 2)(x-y)G(x-y) -\frac{\widetilde \Omega}{4}\tanh(\frac \alpha 2)(x+y)G(x+y)\Big)
\end{align}
(where $\widetilde \Omega=\frac{2\Omega}{\theta}$), and is independent of $\Sigma$.
\end{proposition}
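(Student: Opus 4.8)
The plan is to strip the $\Sigma$-dependence out of the harmonic potential first, and then reduce the defining equation to the one satisfied by the already known standard propagator $C^{\text{st}}$ by a single linear change of variables.

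First I would exploit adaptedness to collapse the quartic combination $G\Sigma^{-1}G\Sigma^{-1}G$. Writing $\Sigma=I^TG$ with $I^2=-\gone$ gives $\Sigma^{-1}=-G^{-1}I^T$, hence $G\Sigma^{-1}G=-I^TG=-\Sigma$ and therefore $G\Sigma^{-1}G\Sigma^{-1}G=(-\Sigma)\Sigma^{-1}G=-G$. Combining this with the identity $\wx_\mu G^{-1}_{\mu\nu}\wx_\nu=-\frac{4}{\theta^2}x(G\Sigma^{-1}G\Sigma^{-1}G)x$ already recorded in the proof of the invariance proposition, the harmonic term becomes $\frac{\Omega^2}{2}G^{-1}_{\mu\nu}\wx_\mu\wx_\nu=\frac{2\Omega^2}{\theta^2}\,xGx=\frac{\widetilde\Omega^2}{2}\,xGx$, which no longer contains $\Sigma$. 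This is precisely the origin of the announced independence of $\Sigma$: once $\Sigma$ is adapted, the whole operator depends on the symplectic data only through $G$.

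Next, the defining equation reads $\big(-\frac12 G^{-1}_{\mu\nu}\partial_\mu\partial_\nu+\frac{\widetilde\Omega^2}{2}\,xGx+\frac{m^2}{2}\big)C(x,y)=\delta(x-y)$, and I would substitute $u=G^{\frac12}x$, $v=G^{\frac12}y$. A direct check gives $G^{-1}_{\mu\nu}\partial^x_\mu\partial^x_\nu=\partial^u_\alpha\partial^u_\alpha$ (using $G^{\frac12}G^{-1}G^{\frac12}=\gone$) and $xGx=u^2$, while the Jacobian turns $\delta(x-y)$ into $\sqrt{|\det G|}\,\delta(u-v)$. Hence $\widetilde C(u,v):=C(G^{-\frac12}u,G^{-\frac12}v)$ solves the standard isotropic harmonic-oscillator Green's-function equation with source $\sqrt{|\det G|}\,\delta(u-v)$, so comparison with the equation satisfied by $C^{\text{st}}$ yields $C(x,y)=\sqrt{|\det G|}\,C^{\text{st}}(G^{\frac12}x,G^{\frac12}y)$. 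Inserting the quoted expression for $C^{\text{st}}$ and using $(G^{\frac12}(x-y))^2=(x-y)G(x-y)$ together with the analogous relation for $x+y$ reproduces the Gaussian factor $C(x,y,\alpha)$, leaves the parameter integral and the prefactor $\frac{\theta}{4\Omega}\big(\frac{\Omega}{\pi\theta}\big)^{\frac D2}$ untouched apart from the global $\sqrt{|\det G|}$, and so gives exactly \eqref{eq-propag}.

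The only genuinely non-routine step is the algebraic identity $G\Sigma^{-1}G\Sigma^{-1}G=-G$; for a non-adapted $\Sigma$ this collapse fails and the single change of variables $u=G^{\frac12}x$ would not simultaneously diagonalise the kinetic and the potential terms, so the clean $\Sigma$-independent answer would not survive. Everything after that identity is bookkeeping. As an alternative I could instead pull the defining equation back through the algebra isomorphism $\caMst\to\caM$, $f\mapsto f(RG^{\frac12}\fois)$ of the previous subsection; the rotation $R$ then drops out because both $\partial^u_\alpha\partial^u_\alpha$ and $u^2$ are $O(\gR^D,\gone)$-invariant, recovering the same expression.
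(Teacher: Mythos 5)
Your proposal is correct, and its decisive first step coincides with the paper's own argument: the paper's proof consists precisely of the identity $G^{-1}_{\mu\nu}\wx_\mu\wx_\nu=\frac{4}{\theta^2}G_{\mu\nu}x_\mu x_\nu$, deduced from the adaptedness property $(\Sigma^{-1})^T G\Sigma^{-1}=G^{-1}$, which is algebraically equivalent (via antisymmetry of $\Sigma$) to your collapse $G\Sigma^{-1}G\Sigma^{-1}G=-G$. Where you genuinely differ is in the second half: the paper simply says one can ``finish the proof like in \cite{Gurau:2005qm}'', i.e.\ re-run the Mehler-kernel computation with $G$ in place of the euclidean metric, whereas you avoid any re-derivation by rescaling $u=G^{\frac12}x$, $v=G^{\frac12}y$ and invoking uniqueness of the Green's function, which yields the closed relation $C(x,y)=\sqrt{|\det G|}\,C^{\text{st}}(G^{\frac12}x,G^{\frac12}y)$. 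This buys two things the paper's citation-style ending does not make explicit: the prefactor $\sqrt{|\det G|}$ appears transparently as the Jacobian of the $\delta$-source rather than having to be tracked through a Gaussian computation, and the relation between $C$ and $C^{\text{st}}$ is exactly the propagator input that reappears in the change of variables $y_i^v=RG^{\frac12}x_i^v$ in the proof of Theorem \ref{thm-ampl} (the rotation $R$ being invisible at the level of the propagator by $O(\gR^D,\gone)$-invariance, as you note). The only step you leave implicit is uniqueness of the Green's function of the operator $-\frac12\partial^u_\alpha\partial^u_\alpha+\frac{\widetilde\Omega^2}{2}u^2+\frac{m^2}{2}$, which is standard for $m^2>0$ since the operator is positive and invertible; the same assumption is implicit in the paper's phrase ``the propagator''. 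So there is no gap, and your route is in fact more self-contained than the one the paper sketches.
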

\begin{proof}
Indeed, $G_{\mu\nu}^{-1}\wx_\mu\wx_\nu=\frac{4}{\theta^2}G_{\mu\nu}x_\mu x_\nu$, using the property $(\Sigma^{-1})^TG\Sigma^{-1}=G^{-1}$. Then, one can finish the proof like in the paper \cite{Gurau:2005qm}.
%Note that one cannot find a propagator of this form if $\Sigma$ is not adapted to $G$.
\end{proof}

Let us now derive the Feynman amplitudes of such a scalar theory.
\begin{lemma}
\label{lemmainter}
The interaction of the action \eqref{actscalcurv} takes the form:
\begin{align}
\int d^Dx(\phi\star_\Sigma\phi\star_\Sigma\phi\star_\Sigma\phi)(x)&=\frac{\det(G)^2}{\pi^{2D}\theta^{2D}}\int d^Dp \left(\prod_{i=1}^4 d^Dx_i\phi(x_i)\right) V(x_1,x_2,x_3,x_4,p),\\
V(x_1,x_2,x_3,x_4,p)&=\exp\left(-i\sum_{i<j}(-1)^{i+j+1}x_i\wedge x_j-i\sum_{i}(-1)^{i+1}p\wedge x_i\right),
\end{align}
where $p$ is called the hypermomentum associated to the vertex.
\end{lemma}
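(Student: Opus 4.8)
The plan is to reduce the four-fold product to a single explicit oscillatory integral by combining the tracial identity \eqref{tracident} with the associative three-product formula of the Proposition, and then to represent the resulting momentum-conservation constraint as a Fourier integral over the hypermomentum $p$.

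First I would apply \eqref{tracident} with $a=\phi$ and $b=\phi\star_\Sigma\phi\star_\Sigma\phi$ (legitimate by associativity) to get $\int d^Dx\,(\phi\star_\Sigma\phi\star_\Sigma\phi\star_\Sigma\phi)(x)=\int d^Dx\,\phi(x)\,(\phi\star_\Sigma\phi\star_\Sigma\phi)(x)$. Substituting the three-product formula of the Proposition then yields, writing $\kappa=\frac{|\det G|^2}{(\pi\theta)^D|\det\Sigma|}$ for the prefactor of the Moyal product,
\[
\int d^Dx\,(\phi\star_\Sigma\phi\star_\Sigma\phi\star_\Sigma\phi)(x)=\kappa\int d^Dx\,d^Dy\,d^Dz\;\phi(x)\phi(x+y)\phi(x+z)\phi(x-y+z)\,e^{-iy\wedge z}.
\]

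Next I would introduce $x_1=x$, $x_2=x+y$, $x_3=x+z$, $x_4=x-y+z$; this change of variables has unit Jacobian and automatically satisfies $x_1-x_2+x_3-x_4=0$, so the four arguments of $\phi$ become $x_1,\dots,x_4$ restricted to that hyperplane. Using antisymmetry of $\wedge$ one finds $y\wedge z=(x_2-x_1)\wedge(x_3-x_1)=x_1\wedge x_2-x_1\wedge x_3+x_2\wedge x_3$. The key observation is that the alternating phase appearing in $V$, namely $\sum_{i<j}(-1)^{i+j+1}x_i\wedge x_j$, differs from $y\wedge z$ only by $(x_1-x_2+x_3)\wedge x_4=x_4\wedge x_4=0$ on the constraint surface; hence the two phases coincide there and I may replace $y\wedge z$ by the symmetric expression.

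Finally I would lift the constrained integral to an unconstrained one over $x_1,\dots,x_4$ by inserting $\delta^D(x_1-x_2+x_3-x_4)$, and represent this delta as a Fourier integral via the identity $\int d^Dx\,e^{-ix\wedge y}=\frac{(\pi\theta)^D|\det\Sigma|}{|\det G|^2}\delta^D(y)$ established in the proof of the Proposition. Its inverse form reads $\delta^D(x_1-x_2+x_3-x_4)=\kappa\int d^Dp\,e^{-ip\wedge(x_1-x_2+x_3-x_4)}$, and since $p\wedge(x_1-x_2+x_3-x_4)=\sum_i(-1)^{i+1}p\wedge x_i$ this reproduces exactly the vertex $V$. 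The accumulated constant is $\kappa\cdot\kappa=\kappa^2=\frac{|\det G|^4}{(\pi\theta)^{2D}|\det\Sigma|^2}$, which equals the stated prefactor $\frac{\det(G)^2}{\pi^{2D}\theta^{2D}}$ precisely because $\Sigma$ is adapted: $\Sigma=I^TG$ with $I^2=-\gone$ forces $|\det\Sigma|=|\det I|\,|\det G|=|\det G|$ in even dimension. The only genuinely non-mechanical points are verifying the phase identity on the constraint surface and recognizing the hypermomentum as the Fourier dual of the momentum-conservation delta; the remaining steps are bookkeeping of Jacobians and constants.
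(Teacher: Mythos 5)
Your proof is correct and takes essentially the same route as the paper's: first put the quartic vertex into the delta-constrained form with the alternating phase $\sum_{i<j}(-1)^{i+j+1}x_i\wedge x_j$, then introduce the hypermomentum by Fourier-representing $\delta(x_1-x_2+x_3-x_4)$, with the prefactor collapsing to $\det(G)^2/(\pi\theta)^{2D}$ via $|\det\Sigma|=\det G$ for adapted $\Sigma$. The only difference is that you actually derive the intermediate delta-function formula (which the paper asserts without proof, as equation \eqref{interaction}) from the tracial identity, the triple-product formula, and the phase identity $(x_1-x_2+x_3)\wedge x_4=0$ on the constraint surface, and you also justify $|\det\Sigma|=|\det G|$ via $|\det I|=1$ --- both are correct and fill in steps the paper leaves implicit.
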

\begin{proof}
Indeed, the vertex can be expressed as:
\begin{equation}
\int d^Dx(\phi\star_\Sigma\phi\star_\Sigma\phi\star_\Sigma\phi)(x)=\frac{\det(G)}{(\pi\theta)^D}\int \left(\prod_{i=1}^4 d^Dx_i\phi(x_i)\right)\delta(x_1-x_2+x_3-x_4)e^{-i\sum_{i<j}(-1)^{i+j+1}x_i\wedge x_j},\label{interaction}
\end{equation}
since $|\det(\Sigma)|=\det(G)>0$. We get the result thanks to the following identity:
\begin{equation}
\delta(x_1-x_2+x_3-x_4)=\frac{\det(G)}{\pi^D\theta^D}\int d^Dp\ e^{-ip\wedge(x_1-x_2+x_3-x_4)},
\end{equation}
\end{proof}

Consider now a graph $\Gamma$ with a set $V$ of $n$ internal vertices, $N$ external legs and a set $L$ of $(2n-\frac N 2)$ internal lines or propagators. In the following, we will compare the amplitudes of such a graph in the theory with a positive metric $G$ and symplectic form $\Sigma$, and the same graph in the standard theory with $G^{\text{st}}$ and $\Sigma^{\text{st}}$.

In the theory defined by \eqref{actscalcurv}, there are four positions (called ``corners'') associated to each vertex $v\in V$, and each corner is bearing either a half internal line or an external field. These corners are denoted by $x^v_i$, where $i\in\{1,..,4\}$ is given by the cyclic order of the Moyal product. The set $I\subset V\times\{1,..,4\}$ of internal corners (hooked to some internal line) has $4n-N$ elements whereas the set $E=V\times\{1,..,4\}\setminus I$ of external corners has $N$ elements. Each vertex $v\in V$ carries also a hypermomentum, which is noted $p_v$. A line $l\in L$ of the graph $\Gamma$ joins two corners in $I$, and we note their positions by $x^{l,1}$ and $x^{l,2}$. We will also note the external corners $x_e$. Be care of the fact that each corner has two notations for its position.

%\pagebreak

\begin{theorem}
\label{thm-ampl}
In the theory \eqref{actscalcurv},
\begin{enumerate}
\item the regularized amputated amplitudes of Feynman graphs depend on a simple way of the symplectic structure $\Sigma$ and of the scalar product $G$. There exists $R\in O(\gR^D,\gone)$, such that $\Sigma^{-1}=G^{-\frac12}R^T(\Sigma^{\text{st}})^{-1}RG^{-\frac12}$, and
\begin{equation}
\caA_\Gamma^{G,\Sigma^{-1}}(\{x_e\})=(\det G)^{\frac12(n+\frac N2)}\caA_\Gamma^{\text{st}}(\{RG^{\frac12}x_e\}),\label{resultampl}
\end{equation}
where $\caA_\Gamma^{\text{st}}$ is the amputated amplitude for the graph $\Gamma$ corresponding to the theory with standard scalar product $G^{\text{st}}$ and symplectic structure $\Sigma^{\text{st}}$.
\item the regularized amputated amplitudes of Feynman graphs are invariant under the action of the orthogonal group:
\begin{equation}
\caA_\Gamma^{G,\Lambda\Sigma^{-1}\Lambda^T}(\{\Lambda x_e\})=\caA_\Gamma^{G,\Sigma^{-1}}(\{x_e\}),\qquad \forall\Lambda\in O(\gR^D,G).
\end{equation}
\end{enumerate}
\end{theorem}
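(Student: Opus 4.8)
The plan is to reduce the whole computation to the standard theory by one linear change of variables, and then to split the two assertions: part (1) by direct substitution plus bookkeeping of determinant factors, and part (2) as a formal consequence of part (1). First I would write $\caA_\Gamma^{G,\Sigma^{-1}}(\{x_e\})$ explicitly as the integral, over the $4n-N$ internal corner positions and the $n$ hypermomenta $p_v$, of the product of the propagators of the Proposition preceding Lemma \ref{lemmainter} (one $C(x^{l,1},x^{l,2})$ per internal line $l\in L$) times the vertex factors $V$ of Lemma \ref{lemmainter} (one per vertex $v\in V$), with the external corners fixed equal to the arguments $x_e$. The two building blocks depend on the data in a controlled way: the propagator is $\Sigma$-independent and sees $G$ only through the quadratic forms $(x\pm y)G(x\pm y)$, while the vertex sees $G$ and $\Sigma$ only through the wedge $y\wedge z=\frac2\theta\,y^TG\Sigma^{-1}Gz$.

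The core step is the substitution provided by Theorem \ref{thmorbit}. Using the adaptation decomposition $\Sigma^{-1}=G^{-\frac12}R^T(\Sigma^{\text{st}})^{-1}RG^{-\frac12}$ with $R\in O(\gR^D,\gone)$, I would change variables $x\mapsto x'=RG^{\frac12}x$ in every internal position and likewise $p_v\mapsto RG^{\frac12}p_v$. Since $R^TR=\gone$, one checks $(x\pm y)G(x\pm y)=(x'\pm y')^2$, so each propagator becomes the standard one up to the scalar $\sqrt{\det G}$; and $G\Sigma^{-1}G=G^{\frac12}R^T(\Sigma^{\text{st}})^{-1}RG^{\frac12}$ turns $y\wedge z$ into the standard wedge $\frac2\theta(y')^T(\Sigma^{\text{st}})^{-1}z'$. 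Thus the whole integrand collapses onto the standard integrand in the primed variables, with the external corners sent to $RG^{\frac12}x_e$, which is exactly the argument appearing on the right of \eqref{resultampl}.

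The step I expect to be the main obstacle is the exact accounting of the powers of $\det G$. Three sources contribute: the $n$ vertex prefactors $\det(G)^2$ from Lemma \ref{lemmainter}, the $\sqrt{\det G}$ carried by each of the $2n-\frac N2$ propagators, and the Jacobians $(\det G)^{-\frac12}$ of the substitution, one for each of the $4n-N$ internal positions and each of the $n$ hypermomenta (all the $\pi,\theta,\Omega$ factors being identical in the two theories and cancelling). Summing the exponents gives $2n+\bigl(n-\tfrac N4\bigr)-\tfrac12(5n-N)=\tfrac12\bigl(n+\tfrac N2\bigr)$, precisely the prefactor in \eqref{resultampl}. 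Here I would also note that the substitution is linear, identical for all integration variables, and does not touch the Schwinger parameters $\alpha$, so it commutes with the regulator; hence the identity holds for the regularized amplitudes. This establishes (1).

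For (2), I would deduce it from (1) without redoing a single integral. Given $\Lambda\in O(\gR^D,G)$, set $W=G^{\frac12}\Lambda G^{-\frac12}$; from $\Lambda^TG\Lambda=G$ one gets $W^TW=\gone$, i.e. $W\in O(\gR^D,\gone)$. Then $\Lambda\Sigma^{-1}\Lambda^T=G^{-\frac12}(R')^T(\Sigma^{\text{st}})^{-1}R'G^{-\frac12}$ with $R'=RW^T\in O(\gR^D,\gone)$, so part (1) applies to the left-hand side and yields $(\det G)^{\frac12(n+\frac N2)}\,\caA^{\text{st}}_\Gamma(\{R'G^{\frac12}\Lambda x_e\})$. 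Finally $R'G^{\frac12}\Lambda=RG^{-\frac12}\Lambda^TG\Lambda=RG^{\frac12}$, so the argument is again $\{RG^{\frac12}x_e\}$ and, by (1) applied to $\Sigma$ itself, the right-hand side equals $\caA_\Gamma^{G,\Sigma^{-1}}(\{x_e\})$. This is the claimed invariance under $O(\gR^D,G)$.
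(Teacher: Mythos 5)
Your proposal is correct and follows essentially the same route as the paper's own proof: the same change of variables $x\mapsto RG^{\frac12}x$ on internal corners and hypermomenta using the adapted decomposition $\Sigma^{-1}=G^{-\frac12}R^T(\Sigma^{\text{st}})^{-1}RG^{-\frac12}$ from Theorem \ref{thmorbit}, the same bookkeeping of $\det G$ powers (your exponent count $2n+(n-\tfrac N4)-\tfrac12(5n-N)=\tfrac12(n+\tfrac N2)$ matches the paper's), and the same reduction of part (2) to part (1) via the composed orthogonal matrix (your $R'=RW^T$ is exactly the paper's $R_\Lambda=RG^{-\frac12}\Lambda^TG^{\frac12}$).
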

\begin{proof}
\begin{itemize}
\item We choose a usual regularization for noncommutative field theories on the Moyal space: the $\alpha$ parameter in \eqref{eq-propag} is integrated between $\epsilon$ and $\infty$, where $\epsilon$ is a UV cut-off.
\item Thanks to Lemma \ref{lemmainter}, we can express the amputated amplitude $\caA_\Gamma^{G,\Sigma^{-1}}$ of a graph $\Gamma$ as (for more details, see \cite{deGoursac:2007qi}):
\begin{align}
\caA_\Gamma^{G,\Sigma^{-1}}(\{x_e\})=&\left(\frac{\theta\sqrt{\det(G)}}{4\Omega}\Big(\frac{\Omega}{\pi\theta}\Big)^{\frac D2}\right)^{2n-\frac N 2} \left(\frac{\det(G)}{\pi^D\theta^D}\right)^{2n} \int_\epsilon^\infty \prod_{l\in L}\frac{d\alpha_l}{\sinh^{\frac D2}(\alpha_l)} \int \prod_{(v,i)\in I}d^Dx^v_i\nonumber\\
&\times\prod_{v\in V}d^Dp_v\prod_{l\in L}C(x^{l,1},x^{l,2},\alpha_l)\prod_{v\in V} V(x^v_1,x^v_2,x^v_3,x^v_4,p_v).
\end{align}
\item In the notations of the proof of Theorem \ref{thmorbit}, there exists $R\in O(\gR^D,\gone)$ such that
\begin{equation}
\Sigma^{-1}=G^{-\frac12}R^T(\Sigma^{\text{st}})^{-1}RG^{-\frac12}.
\end{equation}
Let us do the following change of variables for internal corners and hypermomenta:
\begin{equation}
y_i^v=RG^{\frac12}x_i^v,\quad q_v=RG^{\frac12}p_v.
\end{equation}
Since $R^TR=\gone$,
\begin{align}
C(x^{l,1},x^{l,2},\alpha_l)=& \exp\Big(-\frac{\widetilde \Omega}{4}\coth(\frac \alpha 2)(x^{l,1}-x^{l,2})G^{\frac12}R^TRG^{\frac12}(x^{l,1}-x^{l,2})\nonumber\\
& -\frac{\widetilde \Omega}{4}\tanh(\frac \alpha 2)(x^{l,1}+x^{l,2})G^{\frac 12}R^T RG^{\frac 12}(x^{l,1}+x^{l,2})\Big)\nonumber\\
=&\exp\Big(-\frac{\widetilde \Omega}{4}\coth(\frac \alpha 2)(y^{l,1}-y^{l,2})^2 -\frac{\widetilde \Omega}{4}\tanh(\frac \alpha 2)(y^{l,1}+y^{l,2})^2 \Big)\nonumber\\
=&C^{\text{st}}(y^{l,1},y^{l,2},\alpha_l),
\end{align}
and
\begin{align}
V(x_i^v,p_v)=&\exp\Big(-\frac{2i}{\theta}\sum_{i<j}(-1)^{i+j+1}(x_i^v)^TGG^{-\frac12}R^T(\Sigma^{\text{st}})^{-1} RG^{-\frac12}Gx_j^v\nonumber\\
&-\frac{2i}{\theta}\sum_i(-1)^{i+1}p_v^TGG^{-\frac12}R^T(\Sigma^{\text{st}})^{-1}RG^{-\frac12}Gx_i^v\Big)\nonumber\\
=&\exp\Big(-\frac{2i}{\theta}\sum_{i<j}(-1)^{i+j+1}(y_i^v)^T(\Sigma^{\text{st}})^{-1}y_j^v -\frac{2i}{\theta}\sum_i(-1)^{i+1}q_v^T(\Sigma^{\text{st}})^{-1}y_i^v\Big)\nonumber\\
=&V^{\text{st}}(y_i^v,q_v).
\end{align}
The global Jacobian of this change of variables is $|(\det(RG^{\frac12})^{-1})^{2(2n-\frac N2)+n}|=\det(G)^{-\frac{5n}{2}+\frac N2}$. With the other factors involving $\det(G)$: $\det(G)^{\frac 12(2n-\frac N2)+2n}$, one finds:
\begin{equation}
\caA_\Gamma^{G,\Sigma^{-1}}(\{x_e\})=(\det G)^{\frac12(n+\frac N2)}\caA_\Gamma^{\text{st}}(\{RG^{\frac12}x_e\}).
\end{equation}
Since the changes of variables do not affect the $\alpha_l$-variables, the latter result is true for regularized amplitudes of Feynman graphs.
\item The action of $\Lambda\in O(\gR^D,G)$ on $\caA_\Gamma$ is:
\begin{equation}
(\caA_\Gamma^{G,\Sigma^{-1}}(\{x_e\}))^\Lambda=\caA_\Gamma^{G,\Lambda\Sigma^{-1}\Lambda^T}(\{\Lambda x_e\}).
\end{equation}
Then,
\begin{equation}
\Lambda\Sigma^{-1}\Lambda^T=\Lambda G^{-\frac12}R^T(\Sigma^{\text{st}})^{-1} RG^{-\frac12}\Lambda^T=G^{-\frac12}R_\Lambda^T(\Sigma^{\text{st}})^{-1} R_\Lambda G^{-\frac12},
\end{equation}
if one sets $R_\Lambda=RG^{-\frac12}\Lambda^TG^{\frac12}$. Since $\Lambda G^{-1}\Lambda^T=G^{-1}$, $R_\Lambda\in O(\gR^D,\gone)$:
\begin{equation}
R_\Lambda^T R_\Lambda=G^{\frac12}\Lambda G^{-\frac12}R^TRG^{-\frac12}\Lambda^TG^{\frac12}=\gone.
\end{equation}
By using the first point of the proof,
\begin{equation}
\caA_\Gamma^{G,\Lambda\Sigma^{-1}\Lambda^T}(\{\Lambda x_e\})=(\det G)^{\frac12(n+\frac N2)} \caA_\Gamma^{\text{st}}(\{R_\Lambda G^{\frac12}\Lambda x_e\}).
\end{equation}
Since $R_\Lambda G^{\frac12}\Lambda=RG^{\frac12}$, one finds
\begin{equation}
\caA_\Gamma^{G,\Lambda\Sigma^{-1}\Lambda^T}(\{\Lambda x_e\})=(\det G)^{\frac12(n+\frac N2)} \caA_\Gamma^{\text{st}}(\{RG^{\frac12}x_e\})=\caA_\Gamma^{G,\Sigma^{-1}}(\{x_e\}).
\end{equation}
\end{itemize}
\end{proof}

\begin{corollary}
\label{cor-effact}
The bare (regularized) effective action of the theory \eqref{actscalcurv} is invariant under the orthogonal group.
\end{corollary}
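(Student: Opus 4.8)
The plan is to reduce the statement to the graph-by-graph invariance already established in Theorem \ref{thm-ampl}. The bare regularized effective action $\Gamma_{\text{eff}}[\phi;G,\Sigma^{-1}]$ is the generating functional of one-particle-irreducible amputated Green's functions, so it admits the formal expansion
\begin{equation*}
\Gamma_{\text{eff}}[\phi;G,\Sigma^{-1}]=\sum_{\Gamma}\frac{1}{S(\Gamma)}\int\prod_{e}d^Dx_e\,\phi(x_e)\,\caA_\Gamma^{G,\Sigma^{-1}}(\{x_e\}),
\end{equation*}
where the sum runs over the 1PI graphs $\Gamma$ with external corners indexed by $e$, the coefficient $S(\Gamma)$ is the purely combinatorial symmetry factor, and $\caA_\Gamma^{G,\Sigma^{-1}}$ is precisely the amputated amplitude whose transformation law is controlled by Theorem \ref{thm-ampl}. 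The orthogonal group acts as in \eqref{actrotmoins}, namely by $\phi\mapsto\phi^\Lambda$ with $\phi^\Lambda(x)=\phi(\Lambda^{-1}x)$ together with $\Sigma^{-1}\mapsto\Lambda\Sigma^{-1}\Lambda^T$; I must show that $\Gamma_{\text{eff}}$ is left unchanged under this combined action.

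The core step is then immediate. Substituting the transformed data into a typical term of the expansion gives $\int\prod_e d^Dx_e\,\phi(\Lambda^{-1}x_e)\,\caA_\Gamma^{G,\Lambda\Sigma^{-1}\Lambda^T}(\{x_e\})$, and I would perform the change of variables $x_e=\Lambda y_e$ in every external-position integral. The Lebesgue measure is preserved because $\Lambda\in O(\gR^D,G)$ forces $(\det\Lambda)^2=1$, hence $|\det\Lambda|=1$, so the term becomes $\int\prod_e d^Dy_e\,\phi(y_e)\,\caA_\Gamma^{G,\Lambda\Sigma^{-1}\Lambda^T}(\{\Lambda y_e\})$. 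The second part of Theorem \ref{thm-ampl}, which asserts $\caA_\Gamma^{G,\Lambda\Sigma^{-1}\Lambda^T}(\{\Lambda y_e\})=\caA_\Gamma^{G,\Sigma^{-1}}(\{y_e\})$, then returns each term to its original form; since the symmetry factors $S(\Gamma)$ are combinatorial and untouched, the whole sum reassembles to $\Gamma_{\text{eff}}[\phi;G,\Sigma^{-1}]$, which is the claimed invariance.

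The computational content is therefore entirely carried by Theorem \ref{thm-ampl}, and the only points requiring care are structural rather than calculational. The first is justifying the generating-functional representation above: one must invoke the standard perturbative construction of the effective action as a sum over amputated 1PI graphs and check that the regularization chosen in the proof of Theorem \ref{thm-ampl} (the lower cut-off $\epsilon$ on each Schwinger parameter $\alpha_l$) is compatible with this expansion term by term, which it is, since the cut-off acts independently on each propagator and does not interfere with the external-leg structure. The second, minor, point is the invariance of the Lebesgue measure under $\Lambda$, already noted above. I expect the main obstacle to be purely expository: stating the expansion of $\Gamma_{\text{eff}}$ with enough precision (symmetry factors, and the exact sense in which external corners carry the field insertions) that the change of variables and the appeal to Theorem \ref{thm-ampl} apply verbatim to each term. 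Once that bookkeeping is fixed, no genuine analytic difficulty remains.
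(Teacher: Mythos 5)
Your proof is correct and follows essentially the same route as the paper: expand the bare effective action as a sum over 1PI graphs with combinatorial prefactors and amputated amplitudes, change variables $x_e\mapsto\Lambda y_e$ in the external positions (unit Jacobian since $\Lambda\in O(\gR^D,G)$), and invoke part 2 of Theorem \ref{thm-ampl} to restore each term. The only cosmetic difference is your explicit remark on the Jacobian and the symmetry-factor notation, which the paper leaves implicit in its factor $\alpha(\Gamma)$.
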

\begin{proof}
The effective action is defined to be:
\begin{equation}
\caG^{G,\Sigma^{-1}}(\phi)=\sum_{\Gamma}\alpha(\Gamma)\int \dd x_1\dots\dd x_{N(\Gamma)} \caA_\Gamma^{G,\Sigma^{-1}}(\{x_i\}) \phi(x_1)\dots\phi(x_{N(\Gamma)}),
\end{equation}
where the sum on $\Gamma$ is over the $1PI$ Feynman graphs, $\alpha(\Gamma)$ is a factor depending uniquely on the topology of the graph $\Gamma$, and $N(\Gamma)$ is the number of the external legs of $\Gamma$. Then:
\begin{equation*}
\caG^{G,\Lambda\Sigma^{-1}\Lambda^T}(\phi^\Lambda)=\sum_{\Gamma}\alpha(\Gamma)\int \dd x_1\dots\dd x_{N(\Gamma)} \caA_\Gamma^{G,\Lambda\Sigma^{-1}\Lambda^T}(\{x_i\}) \phi(\Lambda^{-1}x_1)\dots\phi(\Lambda^{-1}x_{N(\Gamma)}).
\end{equation*}
By the change of variables $x'_i=\Lambda^{-1}x_i$, one obtains:
\begin{align*}
\caG^{G,\Lambda\Sigma^{-1}\Lambda^T}(\phi^\Lambda)&=\sum_{\Gamma}\alpha(\Gamma)\int \dd x'_1\dots\dd x'_{N(\Gamma)} \caA_\Gamma^{G,\Lambda\Sigma^{-1}\Lambda^T}(\{\Lambda x'_i\}) \phi(x'_1)\dots\phi(x'_{N(\Gamma)})\\
&=\caG^{G,\Sigma^{-1}}(\phi),
\end{align*}
thanks to Theorem \ref{thm-ampl}.
\end{proof}

Since the orthogonal group acts also on the symplectic structure $\Sigma$, one says sometimes that the effective action is covariant under this group.

\subsection{Renormalizability of QFT}

As a consequence of Theorem \ref{thm-ampl}, one obtains also the renormalizability of scalar theories with harmonic term for a general positive scalar product $G$ and a symplectic structure $\Sigma$ adapted to $G$.

\begin{corollary}
In the four-dimensional case, the theory \eqref{actscalcurv} is renormalizable up to all orders in perturbation, for all positive scalar products $G$, and symplectic structures $\Sigma$ adapted to $G$.
\end{corollary}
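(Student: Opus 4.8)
The plan is to reduce the renormalizability of the general theory \eqref{actscalcurv} to that of the standard Grosse--Wulkenhaar model on $\caMst$, which is known to be renormalizable to all orders in $D=4$ \cite{Grosse:2004yu,Gurau:2005gd}, and to transport that result through the explicit amplitude identity of Theorem \ref{thm-ampl}. First I would recall the precise content of renormalizability for the standard model: with the UV cut-off $\epsilon$ imposed on the Schwinger parameters $\alpha_l$ (exactly the regularization fixed in the proof of Theorem \ref{thm-ampl}), a multiscale power-counting analysis shows that the superficial degree of divergence of a $1PI$ graph is bounded so that only graphs with $N\le 4$ external legs are primitively divergent, and that the divergent part of each such amplitude is local and of one of the four types already present in the action --- namely $\int\partial_\mu\phi\,\partial_\mu\phi$, $\int(\wx_\mu\phi)^2$, $\int\phi^2$ and $\int\phi^{\star4}$. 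These are absorbed into a finite set of counterterms renormalizing the wave function, the oscillator frequency $\Omega$, the mass $m^2$ and the coupling $\lambda$.

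Next I would invoke Theorem \ref{thm-ampl}: for $\Sigma$ adapted to $G$ there is a fixed $R\in O(\gR^D,\gone)$ with
\begin{equation*}
\caA_\Gamma^{G,\Sigma^{-1}}(\{x_e\})=(\det G)^{\frac12(n+\frac N2)}\,\caA_\Gamma^{\text{st}}(\{RG^{\frac12}x_e\}).
\end{equation*}
Because the change of variables used to establish this identity leaves the $\alpha_l$ untouched, the dependence on the cut-off $\epsilon$ --- and hence the entire UV divergence structure --- is identical on both sides, up to the finite constant $(\det G)^{\frac12(n+\frac N2)}$ and the fixed invertible linear substitution $x_e\mapsto RG^{\frac12}x_e$ of the external arguments. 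Neither of these operations alters the superficial degree of divergence nor the locality of the divergent part: a local divergent counterterm of the standard theory pulls back under the linear invertible map $RG^{\frac12}$ --- equivalently, under the algebra isomorphism $f\mapsto f(RG^{\frac12}\fois)$ of subsection 2.3 --- to a local expression of exactly the same type for the action \eqref{actscalcurv}, with the euclidean metric replaced by $G$ and $\Sigma^{\text{st}}$ by $\Sigma$. Thus the four standard counterterms map to the four admissible counterterms of \eqref{actscalcurv}, and renormalizability to all orders follows.

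The step that requires the most care is the last one: verifying that the finite prefactor and the linear reparametrization genuinely commute with the subtraction (Taylor-expansion) operation defining the renormalization, so that the correspondence sends local counterterms to local counterterms of the admissible form and never generates operators absent from \eqref{actscalcurv}. This is where I would use that $RG^{\frac12}$ is a single fixed invertible linear map, independent of the graph and of the internal data, and that, by the Proposition on the propagator, the kinetic and harmonic terms of \eqref{actscalcurv} are precisely the images of the standard ones under this map (this is exactly the point making the propagator $\Sigma$-independent in the standard form with $G$ inserted). The multiscale bounds of the standard model then transfer verbatim, since they depend only on the short-$\alpha_l$ behaviour of the propagators, which is left unchanged; the renormalized effective action inherits the invariance already recorded in Corollary \ref{cor-effact}.
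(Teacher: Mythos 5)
Your proposal is correct and follows essentially the same route as the paper: both reduce renormalizability to that of the standard Grosse--Wulkenhaar model by means of the amplitude identity \eqref{resultampl} of Theorem \ref{thm-ampl}, noting that a constant prefactor and a fixed invertible linear substitution of the external positions preserve the divergence structure. The paper states this transfer tersely (``they have therefore the same divergences\dots the same renormalization procedure can be applied''), whereas you spell out the justification --- locality of counterterms, their pullback under $RG^{\frac12}$ onto the admissible operators of \eqref{actscalcurv}, and the unchanged short-$\alpha_l$ behaviour --- which is a faithful expansion of the same argument rather than a different one.
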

\begin{proof}
The regularized amplitudes $\caA_\Gamma^{G,\Sigma^{-1}}$ can be expressed in a simple way in terms of $\caA_\Gamma^{\text{st}}$ (see Equation \eqref{resultampl}), they have therefore the same divergences. Since the scalar theory with harmonic term for $G^{\text{st}}$ and $\Sigma^{\text{st}}$ is renormalizable at all orders \cite{Grosse:2004yu} for $D=4$, the same renormalization procedure can be applied for the theory \eqref{actscalcurv}. Then, this theory is renormalizable up to all orders in perturbation, and moreover the relation \eqref{resultampl} is still verified for the renormalized amplitudes $\caA_\Gamma^{G,\Sigma^{-1}}$.
\end{proof}

Since the renormalized amplitudes satisfy Equation \eqref{resultampl}, by using the same arguments as in the proof of Theorem \ref{thm-ampl} and Corollary \ref{cor-effact}, one can deduce that:
\begin{corollary}
The renormalized effective action of the renormalizable theory \eqref{actscalcurv} is invariant under the action of the orthogonal group.
\end{corollary}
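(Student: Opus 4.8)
The plan is to reduce the statement to the invariance already obtained at the regularized level, exploiting the fact---established in the preceding corollary---that the renormalized amplitudes $\caA_\Gamma^{G,\Sigma^{-1}}$ still obey the scaling relation \eqref{resultampl}. Granting this, the argument splits into two steps that faithfully replay the proofs of Theorem \ref{thm-ampl} and Corollary \ref{cor-effact}, now with the regulator removed.

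First I would establish that the renormalized amputated amplitudes are themselves invariant under $O(\gR^D,G)$. Fix $\Lambda\in O(\gR^D,G)$ and set $R_\Lambda=RG^{-\frac12}\Lambda^TG^{\frac12}$, where $R\in O(\gR^D,\gone)$ is the matrix of Theorem \ref{thm-ampl} attached to $\Sigma^{-1}$. Using $\Lambda G^{-1}\Lambda^T=G^{-1}$ one checks $R_\Lambda^TR_\Lambda=\gone$, so $R_\Lambda\in O(\gR^D,\gone)$, and $R_\Lambda$ is precisely the orthogonal matrix attached to $\Lambda\Sigma^{-1}\Lambda^T$, since $\Lambda\Sigma^{-1}\Lambda^T=G^{-\frac12}R_\Lambda^T(\Sigma^{\text{st}})^{-1}R_\Lambda G^{-\frac12}$. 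Applying the renormalized relation \eqref{resultampl} to both $\Sigma^{-1}$ and $\Lambda\Sigma^{-1}\Lambda^T$, together with the identity $R_\Lambda G^{\frac12}\Lambda=RG^{\frac12}$, I would obtain
\begin{equation*}
\caA_\Gamma^{G,\Lambda\Sigma^{-1}\Lambda^T}(\{\Lambda x_e\})=(\det G)^{\frac12(n+\frac N2)}\caA_\Gamma^{\text{st}}(\{R_\Lambda G^{\frac12}\Lambda x_e\})=(\det G)^{\frac12(n+\frac N2)}\caA_\Gamma^{\text{st}}(\{RG^{\frac12}x_e\})=\caA_\Gamma^{G,\Sigma^{-1}}(\{x_e\}),
\end{equation*}
which is exactly the invariance of the renormalized amplitudes under the orthogonal group.

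Second, I would insert this into the renormalized effective action $\caG^{G,\Sigma^{-1}}(\phi)=\sum_\Gamma\alpha(\Gamma)\int \dd x_1\cdots \dd x_{N(\Gamma)}\,\caA_\Gamma^{G,\Sigma^{-1}}(\{x_i\})\,\phi(x_1)\cdots\phi(x_{N(\Gamma)})$, the sum running over $1PI$ graphs with $\caA_\Gamma$ now renormalized. Exactly as in Corollary \ref{cor-effact}, I would compute $\caG^{G,\Lambda\Sigma^{-1}\Lambda^T}(\phi^\Lambda)$, perform the change of variables $x_i'=\Lambda^{-1}x_i$ (whose Jacobian is $1$ because $|\det\Lambda|=1$ for $\Lambda\in O(\gR^D,G)$), and use the amplitude invariance just proved to collapse $\caA_\Gamma^{G,\Lambda\Sigma^{-1}\Lambda^T}(\{\Lambda x_i'\})$ to $\caA_\Gamma^{G,\Sigma^{-1}}(\{x_i'\})$. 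This yields $\caG^{G,\Lambda\Sigma^{-1}\Lambda^T}(\phi^\Lambda)=\caG^{G,\Sigma^{-1}}(\phi)$ for every $\Lambda\in O(\gR^D,G)$, which is the claim.

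The only genuinely nontrivial point---and the one I would guard most carefully---is the persistence of \eqref{resultampl} under renormalization, which I am permitted to assume from the previous corollary. It holds because the transformation relating the theory with $(G,\Sigma)$ to the standard one is an affine change of the external positions $x_e\mapsto RG^{\frac12}x_e$ together with the global constant prefactor $(\det G)^{\frac12(n+\frac N2)}$; such an affine rescaling commutes with the local subtractions defining the counterterms, so the renormalized amplitudes inherit the same relation as the regularized ones. Once this is granted, the remainder of the argument is purely formal and the invariance of the renormalized effective action follows at once.
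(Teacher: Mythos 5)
Your proposal is correct and follows essentially the same route as the paper: the paper's proof is precisely the observation that the renormalized amplitudes still satisfy \eqref{resultampl} (granted by the renormalizability corollary), after which one replays the second part of the proof of Theorem \ref{thm-ampl} (via $R_\Lambda=RG^{-\frac12}\Lambda^TG^{\frac12}$) and the change of variables of Corollary \ref{cor-effact}. Your write-up merely makes explicit what the paper leaves as ``the same arguments,'' so there is nothing to correct.
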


\subsection{Conclusion}

We have considered in this paper a scalar field theory with harmonic term on the Moyal algebra $\caM$, with a general (positive) scalar product $G$ and a symplectic structure $\Sigma$, presented in subsection 2.1. For a given $\Sigma$, the classical action \eqref{actscalcurv} of this theory is not invariant under the whole orthogonal group. However, if one allows that the orthogonal group acts also on $\Sigma$ by \eqref{actrot}, one recovers the invariance of the classical action under this group, as exposed in subsection 2.2. But, $\Sigma$ is now moving in the orbits of the action \eqref{actrot}. In subsections 2.3, we have seen that there exists a particular orbit of this action: the orbit of symplectic structures adapted to a fixed scalar product $G$, namely symplectic structures related to $G$ by a complex structure.

If we suppose that $\Sigma$ belongs to this special orbit, one obtains  several properties at the quantum level: the propagator of the theory has an analogous expression as in the standard case, the regularized amputated amplitudes of Feynman graphs can be simply reexpressed in terms of standard amplitudes. As a consequence, we have seen in subsection 3.1 that the bare effective action is invariant under the whole orthogonal group. In subsection 3.2, we have obtained the renormalizability of the theory \eqref{actscalcurv} in the four-dimensional case, provided that $\Sigma$ is adapted to $G$, and the renormalized effective action is also invariant under the action of the orthogonal group. The theory \eqref{actscalcurv} admits therefore all the rotations as symmetries, even if $D>2$.

Note that the investigation of all the vacuum configurations of the Grosse-Wulkenhaar model by assuming their invariance under the group $SO(D)\cap Sp(D)$ gives in fact solutions which are invariant under the whole group $O(D)$ \cite{deGoursac:2007uv,deGoursac:2009gh}. This can be explained by the hidden symmetry of the theory under $O(D)$ as shown above.

The use of adapted symplectic structures has been a key ingredient in all the proofs, so that the restoration of the invariance under the whole orthogonal group at the quantum level, as it is described here, is deeply related to the particular orbit of adapted symplectic structures. Furthermore, these results give also a satisfying answer to the problem of naturality of a symplectic structure on a position space. Indeed, the theory \eqref{actscalcurv} does not depend on a fixed symplectic structure anymore, but only from a special orbit of the action \eqref{actrot}, uniquely determined by the scalar product $G$.

These results cannot be directly adapted to the Minkowskian framework, namely for a scalar product $G$ of signature $(D-1,1)$, since the set of adapted symplectic structures (by complex structures) to such a scalar product $G$ is empty. However, it would be interesting to study if the results of this paper can be extended to a curved framework, for instance a symmetric space on which the Langmann-Szabo duality has been implemented \cite{Bieliavsky:2008qy}.

\vskip 1 true cm
\noindent
{\bf{Acknowledgements}}: One of the authors (A.G.) thanks Raimar Wulkenhaar for interesting discussions on this work.

\bibliographystyle{utcaps}
\bibliography{biblio-these,biblio-perso,biblio-recents}

\end{document}